\documentclass[aps,pra,groupedaddress,tightenlines]{revtex4}

\usepackage{amsmath}
\usepackage{graphicx}
\usepackage[english]{babel}
\usepackage{hyperref}

\usepackage{graphicx}
\usepackage{subfigure}  
\usepackage{epsfig}
\usepackage[english]{babel}
\usepackage{amssymb}
\usepackage{amsfonts}
\usepackage{amsmath}
\usepackage{amsthm}
\usepackage{amscd}  
\usepackage{paralist} 
\usepackage{psfrag}



\newtheorem{theorem}{Theorem} 
\newtheorem{lemma}{Lemma} 
\newtheorem{proposition}{Proposition} 

\begin{document}

\title{Fault tolerance for holonomic quantum computation\footnote{A chapter in the book \textit{Quantum Error Correction}, edited by Daniel A. Lidar and Todd A. Brun, (Cambridge University Press, 2013), 
\href{http://www.cambridge.org/us/academic/subjects/physics/quantum-physics-quantum-information-and-quantum-computation/quantum-error-correction}{http://www.cambridge.org/us/academic/subjects/physics/quantum-physics-quantum-information-and-quantum-computation/quantum-error-correction}.}}

\author{Ognyan Oreshkov}
\affiliation{QuIC, Ecole Polytechnique, CP 165, Universit\'{e} Libre de Bruxelles, 1050 Brussels, Belgium}

\author{Todd A. Brun}
\affiliation{Departments of Electrical Engineering and Physics, and Center for Quantum Information Science \& Technology, University of Southern California, Los Angeles, California
90089, USA}

\author{Daniel A. Lidar}
\affiliation{Departments of Chemistry, Electrical Engineering, and Physics, and Center for Quantum Information Science \& Technology, University of Southern California, Los Angeles, California
90089, USA}

\begin{abstract}
We review an approach to fault-tolerant holonomic quantum computation on stabilizer codes. We explain its workings as based on adiabatic dragging of the subsystem containing the logical information around suitable loops along which the information remains protected.
\end{abstract}

\maketitle


\section{Introduction}\label{introduction}

In Chapter 16 it was shown how holonomic quantum computation
(HQC) can be combined with the method of decoherence-free subspaces
(DFSs), leading to passive protection against certain types of
correlated errors. However, this is not enough for fault tolerance
since other types of errors can accumulate detrimentally unless
corrected. Scalability of HQC therefore requires going beyond that
scheme, e.g., by combining the holonomic approach with \emph{active}
error correction. One way of combining HQC with active quantum
error-correcting codes, which is similar to the way HQC is combined
with\index{DFS (decoherence-free subspace)} DFSs, was also mentioned in Chapter 16. This approach,
however, is not scalable since it requires Hamiltonians that commute
with the stabilizer elements, and when the code increases in size,
this necessitates couplings that become increasingly nonlocal.

In this chapter, we will show how HQC can be made fault tolerant by
combining it with the techniques for fault-tolerant quantum error
correction (FTQEC) on stabilizer codes using Hamiltonians of finite
locality. The fact that the holonomic method can be mapped directly
to the circuit model allows us to construct procedures which resort
almost entirely to these techniques. We will discuss an approach
which makes use of the encoding already present in
a stabilizer code and does not require additional qubits
\cite{Oreshkov:0812.4682, Oreshkov:2008:070502}. An alternative approach which requires ancillary qubits\index{ancillary qubits} for implementing\index{transversal!operation} transversal operations between qubits in the code can be found in Ref.~\cite{Oreshkov:2009:090502}.

Since protected information is contained in subsystems
\cite{Knill:2006:042301, Blume-Kohout:2008:030501} (see
Chapter 6), the problem of implementing fault-tolerant HQC
can be understood as that of manipulating fault-tolerantly the
subsystem containing the protected information by holonomic means.
We therefore begin by first introducing a generalization of the
standard HQC method, which is applicable to encoding in subsystems.

\section{Holonomic quantum computation on
subsystems}\label{e4:HQCsubsystems}

As pointed out in Chapter 6, protected quantum information is
most generally contained in the subsystems $\mathcal{H}^A_i$ in a
decomposition of the Hilbert space of the system of the form
\begin{equation}
\mathcal{H}^S=\bigoplus_{i=1}^m\mathcal{H}^A_i\otimes\mathcal{H}^B_i \oplus \mathcal{K}.\label{e4:decompositionfull}
\end{equation}
Here the dimensions of the subsystems are related by
$\textrm{dim}\mathcal{H}^S=\overset{m}{\underset{i=1}{\sum}}
\textrm{dim}\mathcal{H}^A_i \times\textrm{dim}\mathcal{H}^B_i + \textrm{dim}\mathcal{K}$. In
the formalism of operator quantum error correction\index{quantum error correction!operator}
\cite{Kribs:2005:180501, Kribs:2006:382, Poulin:2005:230504,
Oreshkov:2008:022333}, the subsystems $\mathcal{H}^A_i$ contain the logical information, and $\mathcal{H}^B_i$ contain the syndrome and gauge degrees of freedom. In the most general case,
all subsystems $\mathcal{H}^A_i$ can be used for encoding and
computation \cite{Beny:2007:100502}.
By using adiabatic holonomies it is possible to apply arbitrary computations
in the subsystems $\mathcal{H}^A_i$, a result which is summarized in
the following theorem \cite{Oreshkov:2009:090502}:

\begin{theorem}\label{e4:Theorem1sub} Consider a non-trivial decomposition into
subsystems of the form \eqref{e4:decompositionfull}. Choose an initial
Hamiltonian in the form
\begin{gather}
H(0)=\bigoplus_{i=1}^m I^A_i\otimes H^B_i \oplus H_{\mathcal{K}},\label{e4:H0HQSsub}
\end{gather}
where $H^B_i$ are operators on $\mathcal{H}^B_i$ such that all
eigenvalues of $H^B_i$ are different from the eigenvalues of $H_{\mathcal{K}}$ and the eigenvalues of $H^B_j$
for $i\neq j$. In the case when $\textrm{dim}\mathcal{K}=0$ and $m=1$, we impose the additional
requirement that $H^B_1$ has at least two different eigenvalues. By
varying this Hamiltonian adiabatically along suitable loops in a
sufficiently large control manifold, it is possible to generate a
unitary of the form
\begin{gather}
U=\bigoplus_{i=1}^m W_i^A\otimes V^B_i \oplus V_{\mathcal{K}},
\end{gather}
where $\{W_i^A\}$ is any desired set of geometric unitary transformations
on $\{\mathcal{H}^A_i\}$.
\end{theorem}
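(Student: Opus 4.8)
The plan is to recast the statement as a non-Abelian adiabatic holonomy problem and then let the product structure of $H(0)$ do most of the work. First I would fix a control manifold $\mathcal M$ of Hamiltonians through $H(0)$ and recall that, by the adiabatic theorem, a sufficiently slow traversal of a loop $\gamma\subset\mathcal M$ returns each degenerate eigenspace of $H(0)$ to itself, with net action given by the Wilczek--Zee holonomy $\mathcal P\exp(-\oint_\gamma A)$ of the adiabatic connection $(A_\mu)_{ab}=\langle\psi_a|\partial_\mu\psi_b\rangle$ evaluated in a local eigenbasis. The spectral-separation hypothesis is exactly what makes this clean: since all eigenvalues of $H^B_i$ are distinct across blocks and from those of $H_\mathcal K$, each eigenvalue $\lambda$ of $H^B_i$ yields an eigenspace of $H(0)$ equal to $\mathcal H^A_i\otimes\mathcal E^B_{i,\lambda}$, and these spaces stay gapped and isolated under small deformations. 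Consequently every achievable holonomy is automatically block diagonal, $U=\bigoplus_i\bigoplus_\lambda U_{i,\lambda}\oplus U_\mathcal K$, with each $U_{i,\lambda}$ a unitary on $\mathcal H^A_i\otimes\mathcal E^B_{i,\lambda}$.

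It then remains to engineer loops whose block holonomies take the factorized form $U_{i,\lambda}=W^A_i\otimes(V^B_i|_{\mathcal E^B_{i,\lambda}})$ with one and the same $W^A_i$ in every eigenspace $\lambda$ of block $i$; reassembling the blocks yields precisely $U=\bigoplus_i W^A_i\otimes V^B_i\oplus V_\mathcal K$, where each $V^B_i$ necessarily commutes with $H^B_i$ (this is all the theorem asks of the $B$-factors, and it is why only the $W^A_i$ can be arbitrary). I would build these by a generating construction: for a fixed block $i$, pick two distinct eigenvalues $\lambda_1,\lambda_2$ of $H^B_i$ and use the second $B$-level as a reference against which the $\mathcal H^A_i$ factor is adiabatically dragged around a loop, so that the population returns to its original $B$-level having undergone a purely geometric rotation on $\mathcal H^A_i$. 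Because the dragging acts on the $A_i$ factor identically regardless of which $H^B_i$-eigenspace carries it, the resulting holonomy is $W^A_i$ on the $A_i$ factor, uniformly over $\lambda$. This is exactly the step that consumes the extra hypothesis in the case $m=1$, $\dim\mathcal K=0$: if $H^B_1$ had a single eigenvalue then $H(0)\propto I$, leaving one undifferentiated gapless eigenspace with no reference level to drag $\mathcal H^A_1$ against, so no nontrivial factorized holonomy could be produced.

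With such elementary loops available I would close the argument by a group/Lie-algebra step. The holonomies obtainable from $\mathcal M$ form a group --- concatenating loops multiplies holonomies and reversing a loop inverts it --- so it suffices to show that the elementary loops generate every $W^A_i\in U(\mathcal H^A_i)$. For a sufficiently large control manifold the associated generators span the relevant Lie algebra $\mathfrak u(\mathcal H^A_i)$, and closure under composition (using the non-Abelian commutator structure of the holonomies) delivers arbitrary $W^A_i$. Finally, taking the loops for distinct blocks to be supported on block-local controls makes them act as the identity on all other blocks, so composing one loop per block realizes any prescribed tuple $\{W^A_i\}$ at once.

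The hard part will be the middle step. The block-diagonal structure alone only guarantees some unitary on each $\mathcal H^A_i\otimes\mathcal E^B_{i,\lambda}$, which is generically entangling; forcing the holonomy to factorize as $W^A_i$ on the $A_i$ factor, and moreover to realize the very same $W^A_i$ in every $H^B_i$-eigenspace, requires controlling the adiabatic connection so that the $A_i$ and $B$ degrees of freedom decouple and verifying that the path-ordered exponential does not reintroduce correlations between the $A_i$ and $B$ factors. Once this factorization lemma is in place, the block diagonalization, the group-theoretic composition, and the degeneracy argument for the special case are comparatively routine.
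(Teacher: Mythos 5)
Your setup is sound and matches the paper's: you identify the eigenspaces of $H(0)$ as $\mathcal H^A_i\otimes\mathcal E^B_{i,\lambda}$, note that adiabatic holonomies are block diagonal with respect to them, and correctly reduce the theorem to showing that the \emph{same} $W^A_i$ can be made to appear (tensored with something on the $B$ factor) in every eigenspace belonging to block $i$. But that reduction is exactly where your argument stops being a proof. Your ``generating construction'' --- dragging the $\mathcal H^A_i$ factor against a reference $B$-level --- asserts without justification that ``the dragging acts on the $A_i$ factor identically regardless of which $H^B_i$-eigenspace carries it.'' This is precisely the non-obvious claim: any deformation that produces a nontrivial holonomy on the $A$ factor must break the $I^A_i\otimes H^B_i$ product structure (conjugating by $W\otimes I^B$ leaves $H(0)$ invariant and generates nothing), and once it does, the holonomies induced in the different eigenspaces are \emph{correlated}, not independent --- the paper itself emphasizes that for a two-level Hamiltonian the evolution of one eigenspace completely determines that of the other. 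Your own closing paragraph concedes that this ``factorization lemma'' is the hard part and is not supplied, so the proposal has a genuine gap at its load-bearing step.

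The paper closes this gap differently, and more cleanly than the decoupling-of-the-connection route you envisage. Its Lemma~\ref{e4:Lemma1sub} establishes that one can realize \emph{any} prescribed combination of unitaries in the distinct eigenspaces --- in particular a chosen unitary in one eigenspace together with the identity in all others --- via an explicit averaging trick: run a loop yielding $W_1^{1/d_2}$ in the target eigenspace and some $W_2=\sum_j e^{i\alpha_j}\ket{j}\bra{j}$ in the other, then repeat the loop conjugated by successive powers of the cyclic permutation $C_2$ of the eigenbasis of $W_2$; after $d_2$ rounds the unwanted holonomy averages to a phase while the target accumulates $W_1$. With independence of eigenspace holonomies in hand, the factorization you worry about is imposed by fiat: one simply \emph{chooses} the target holonomy in eigenspace $\mathcal H^A_i\otimes\mathcal H^B_{\alpha_i}$ to be the product operator $W^A_i\otimes W^B_{\alpha_i}$, and the direct sum over $\alpha_i$ (with dynamical phases) assembles into $W^A_i\otimes V^B_i$. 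No control of the adiabatic connection to decouple $A$ from $B$ is required. To repair your proof you would either need to prove your factorization lemma directly --- which I do not see how to do with the reference-level dragging as described --- or replace it with an independence lemma of the paper's type.
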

The proof of the theorem is based on the following lemma.

\begin{lemma}\label{e4:Lemma1sub} By varying a Hamiltonian adiabatically along
suitable loops in a sufficiently large control manifold, it is
possible to implement holonomically any combination of unitary
transformations in its eigenspaces.
\end{lemma}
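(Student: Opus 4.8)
The plan is to turn the statement into a question about the holonomy group of the Wilczek--Zee connection and then answer it by engineering the connection's curvature. First I would invoke the adiabatic theorem: if a loop $C$ in control space is traversed slowly compared with the inverse spectral gaps, a state prepared in the $k$-th eigenspace is returned to it, and the induced map there is the non-Abelian holonomy $W_k=\mathcal{P}\exp\!\big(i\oint_C\mathcal{A}^{(k)}\big)$, with connection $(\mathcal{A}_\mu^{(k)})_{ab}=i\bra{k,a}\partial_\mu\ket{k,b}$; the accompanying dynamical phase is a scalar on each eigenspace and factors out. The set of achievable tuples $\{W_k\}$ is thus the holonomy group $\mathcal{G}\subseteq\prod_k U(d_k)$, with $d_k$ the dimension of the $k$-th eigenspace, and the lemma is the assertion that $\mathcal{G}$ is the whole group.

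To compute and control the connection I would restrict, without loss of generality, to the family $H(\lambda)=V(\lambda)H(0)V^\dagger(\lambda)$ with $V(\lambda)=\exp\!\big(i\sum_\alpha\lambda_\alpha G_\alpha\big)$, so that the eigenprojectors are $P_k(\lambda)=VP_kV^\dagger$ and the spectrum is frozen. In the co-moving frame the connection reduces to the block-diagonal part of the Maurer--Cartan form, $\mathcal{A}^{(k)}=iP_k\,(V^\dagger dV)\,P_k$, and using the flatness identity $d(V^\dagger dV)+(V^\dagger dV)\wedge(V^\dagger dV)=0$ a short computation expresses its curvature entirely through the inter-block parts $\omega_{kj}=P_k(V^\dagger dV)P_j$: \[ F^{(k)}=-\sum_{j\neq k}\omega_{kj}\wedge\omega_{jk}. \] Near $\lambda=0$ this reads $F^{(k)}_{\alpha\beta}\propto\sum_{j\neq k}\big(P_kG_\alpha P_jG_\beta P_k-P_kG_\beta P_jG_\alpha P_k\big)$, making precise the geometric heart of HQC: holonomy in an eigenspace is produced solely by its virtual couplings to the others.

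The decisive step --- and the place where ``a sufficiently large control manifold'' is spent --- is to choose the generators $\{G_\alpha\}$ so that these curvatures both fill each $\mathfrak{u}(d_k)$ and act independently across blocks. The key observation I would exploit is that curvature \emph{localizes}: taking the two couplings $G_\alpha=\ket{k,a}\bra{j,c}+\mathrm{h.c.}$ and $G_\beta=\ket{k,b}\bra{j,c}+\mathrm{h.c.}$ through a common auxiliary state $\ket{j,c}$ but distinct internal states $a\neq b$ yields $F^{(k)}_{\alpha\beta}\propto\ket{k,a}\bra{k,b}-\ket{k,b}\bra{k,a}$, while the corresponding contribution to $F^{(j)}$ vanishes because $\bracket{k,a}{k,b}=0$; the curvature then sits entirely in block $k$. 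Allowing complex couplings supplies the Hermitian partners, and commutators of the resulting localized generators close onto all of $\mathfrak{su}(d_k)$; since distinct blocks can be driven through orthogonal internal states, their generators commute and the different $\mathfrak{su}(d_k)$ are produced independently. By the Ambrose--Singer theorem the Lie algebra of $\mathcal{G}$ is spanned by the curvatures together with their parallel transports, hence contains $\bigoplus_k\mathfrak{su}(d_k)$; because $\mathcal{G}$ is connected, composing loops (which multiplies holonomies) exponentiates this algebra onto the full connected group and realizes any prescribed special-unitary tuple.

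The hard part is exactly this third step: verifying that the block-diagonal curvatures can be made simultaneously full and mutually independent, which rests on the localization identity above and on carefully tracking which blocks each generator touches. One genuine residue survives the pure holonomic construction --- the single overall phase $\prod_k\det W_k$, which the frozen-spectrum loops pin to unity --- but this phase is physically irrelevant and, if one insists on it, can be supplied as a relative dynamical phase by briefly letting the eigenvalues move, while every relative phase among the blocks is already geometric. Thus, up to this immaterial global phase, $\mathcal{G}=\prod_k U(d_k)$, which is the lemma.
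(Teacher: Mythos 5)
Your argument is essentially correct, but it takes a genuinely different route from the paper's. The paper's proof is elementary and constructive: it reduces to two eigenspaces, observes that conjugating an entire loop by a block-diagonal unitary $V_1\oplus V_2$ conjugates the resulting holonomies to $V_iU_iV_i^\dagger$, and then cancels the unwanted holonomy in the second eigenspace by an averaging trick --- running $d_2$ copies of a loop that produces $W_1^{1/d_2}$ in the target space, each conjugated by successive powers of a cyclic permutation of the eigenbasis of the spurious holonomy $W_2$, so that the excited space accumulates $e^{i\sum_j\alpha_j}I_2$ while the ground space accumulates $W_1$. Your proof instead works infinitesimally: it computes the curvature of the Wilczek--Zee connection for an isospectral family, shows by the localization identity (with $a\neq b$ and a shared auxiliary state) that curvature can be deposited in a single block, and invokes Ambrose--Singer to conclude that the holonomy group contains $\bigoplus_k\mathfrak{su}(d_k)$. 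What your approach buys is a systematic identification of exactly what is and is not reachable --- in particular the single obstruction $\prod_k\det W_k=1$ from the tracelessness of the total curvature --- and it does not presuppose, as the paper does, that an arbitrary holonomy in one chosen eigenspace is already achievable. What the paper's approach buys is a concrete loop-level prescription with no differential-geometric machinery. Two caveats on your writeup: first, your localization identity as stated covers only $\mathfrak{su}(d_k)$; the claim that the relative phases between blocks are also geometric needs the $a=b$ case of the same curvature formula, which produces $-2i\ket{k,a}\bra{k,a}$ in block $k$ together with $+2i\ket{j,c}\bra{j,c}$ in block $j$ --- still traceless but no longer localized --- and this is what supplies all traceless combinations of the block identities (and is in fact essential in the two-level nondegenerate case, where all the $\mathfrak{su}(d_k)$ are trivial). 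Second, note that both your proof and the paper's establish the lemma only up to an overall global phase; this matches how the lemma is actually invoked in the proof of Theorem~\ref{e4:Theorem1sub}, where the statement used is ``any combination of unitary transformations in the different eigenspaces \ldots up to an overall phase.''
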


\textit{Comment}. As discussed in Chapter 16, if we have
sufficient control over the parameters of a Hamiltonian, we can
generate holonomically any unitary operation in a given eigenspace.
This lemma concerns the question of whether it is possible to
generate any \textit{combination} of holonomies in the different
eigenspaces. Although intuitively expected based on considerations
concerning the generic irreducibility of the adiabatic connection,\index{connection!adiabatic}
the property may not be obvious. For example, in the case of a
two-level Hamiltonian, the evolution of one of the eigenspaces
completely determines the evolution of the other one, which raises
the question of whether it is possible to obtain independent
holonomies in the two eigenspaces. We now show that this is
possible. Note that even though the proof is constructive and can
serve as a general prescription for simultaneous HQC in different
eigenspaces, it is primarily meant as a proof of principle.

\begin{proof}[Proof of Lemma~\ref{e4:Lemma1sub}]
It is sufficient to show
that it is possible to generate an arbitrary operation in any given
eigenspace while at the same time generating the identity operation
in the rest of the eigenspaces. Without loss of generality, we will
assume that there are only two eigenspaces; if there are more, we
can operate within the subspace spanned by two of them at a time, by
varying only the restriction of the Hamiltonian on that subspace.
Then the initial Hamiltonian can be written as
\begin{gather}
H(0)=\varepsilon_1\Pi_1+\varepsilon_2\Pi_2,
\end{gather}
where $\Pi_{1,2}$ are the projectors\index{projector} on the ground and excited
eigenspaces, and $\varepsilon_1<\varepsilon_2$ are their
corresponding eigenvalues. Notice that this Hamiltonian is invariant
under unitary transformations of the form
\begin{gather}
V=V_1\oplus V_2,\label{e4:V}
\end{gather}
where $V_{1,2}$ are unitaries on the subspaces with projectors
$\Pi_{1,2}$, respectively. Let the Hamiltonian vary along a loop
$H(t)$, $H(0)=H(T)$, which satisfies the adiabatic requirement to
some satisfactory precision. To this precision, the resulting
unitary transformation can be written as
\begin{gather}
U(T)=\mathcal{T}\textrm{exp}(-i\int_0^TdtH(t))=e^{-i\omega_1}U_1
\oplus e^{-i\omega_2}U_2,
\end{gather}
where $U_{1}$ and $U_2$ are the holonomies resulting in the two
eigenspaces, and $\omega_{1,2}=\int_0^T dt\varepsilon_{1,2}(t)$ are
dynamical phases. Observe that the Hamiltonian $VH(t)V^{\dagger}$,
where $V=V_1\oplus V_2$, is a valid loop based on $H(0)$ with the
same spectrum as that of $H(t)$, which gives rise to the holonomies
$V_1U_1V_1^{\dagger}$ and $V_2U_2V_2^{\dagger}$, respectively. This
follows from the fact that the overall unitary transformation
generated by $VH(t)V^{\dagger}$ is equal to $VU(t)V^{\dagger}$ where
$U(t)$ is the unitary generated by $H(t)$.

Imagine that we want to generate holonomically the unitary
transformation $W_1$ in the ground space of the Hamiltonian while at
the same time obtaining the identity\index{holonomy} holonomy $I_2$ in the excited
space. Choose any loop $H(t)$ which gives rise to the\index{holonomy} holonomy
$W_1^{\frac{1}{d_2}}$ in the ground space, where $d_2$ is the
dimension of the excited space (we know that such a loop can be
found). Let this loop result in the\index{holonomy} holonomy $W_2$ in the excited
space. The latter can be written as
$W_2=\overset{d_2}{\underset{j=1}{\sum}}
e^{i\alpha_j}|j\rangle\langle j|$, where $\{|j\rangle\}$ is an
eigenbasis of $W_2$ and $e^{i\alpha_j}, \alpha_j\in R$, are the
corresponding eigenvalues. Consider the unitary transformation $C_2$
which cyclicly permutes the eigenvectors $\{|j\rangle\}$:
$C_2|j\rangle=|j+1\rangle$, where we define $|d_2+1\rangle\equiv
|1\rangle$. We can implement the desired combination of holonomies
in the two eigenspaces as follows. First apply $H(t)$. This results
in the holonomies $W_1^{\frac{1}{d_2}}$ and $W_2$ in the ground and
excited spaces, respectively. Next, apply $(I_1\oplus
C_2)H(t)(I_1\oplus C_2)^{\dagger}$. This generates the holonomies
$W_1^{\frac{1}{d_2}}$ and
$C_2W_2C_2^{\dagger}=\overset{d_2}{\underset{j=1}{\sum}}
e^{i\alpha_{j-1}}|j\rangle\langle j|$ (we have defined
$\alpha_{1-1}\equiv \alpha_{d_2}$). The combined effect of these two
operations is $W_1^{\frac{2}{d_2}}$ and
$\overset{d_2}{\underset{j=1}{\sum}}
e^{i(\alpha_j+\alpha_{j-1})}|j\rangle\langle j| $. We next apply
$(I_1\oplus C_2^2)H(t)(I_1\oplus C_2^2)^{\dagger}$, which generates
the holonomies $W_1^{\frac{1}{d_2}}$ and
$C_2^2W_2C_2^{2\dagger}=\overset{d_2}{\underset{j=1}{\sum}}
e^{i\alpha_{j-2}}|j\rangle\langle j|$. The net result becomes
$W_1^{\frac{3}{d_2}}$ and $\overset{d_2}{\underset{j=1}{\sum}}
e^{i(\alpha_j+\alpha_{j-1}+\alpha_{j-2})}|j\rangle\langle j| $. We
continue this for a total of $d_2$ rounds, which results in the net
holonomic transformations $W_1^{\frac{d_2}{d_2}}=W_1$ and
$e^{i(\alpha_1+\alpha_2+...+\alpha_{d_2})}
\overset{d_2}{\underset{j=1}{\sum}} |j\rangle\langle j|\propto I_2$.
\end{proof}

\begin{proof}[Proof of Theorem~\ref{e4:Theorem1sub}] For the purposes of this theorem, $\mathcal{K}$ can be regarded as a particular $\mathcal{H}^B_i$ whose co-factor $\mathcal{H}^A_i$ is 1-dimensional, so we can assume that $\textrm{dim}\mathcal{K} = 0$ for simplicity. Let us denote the
eigenvalues of $H^B_i$ in Eq.~\eqref{e4:H0HQSsub} by
$\omega_{\alpha_i}$ where $\alpha_i=1,2,...,d_i$, and the projectors
on their corresponding eigenspaces $\mathcal{H}^B_{\alpha_i}$ by
$\Pi^B_{\alpha_i}$. Then the spectral decomposition\index{spectral decomposition} of the initial
Hamiltonian reads
$H(0)=\sum_{i=1}^m\sum_{\alpha_i=1}^{d_i}\omega_{\alpha_i}
\Pi^A_{i}\otimes \Pi^B_{\alpha_i}$, where $\Pi^A_{i}$ is the
projector on $\mathcal{H}^A_i$. According to Lemma~\ref{e4:Lemma1sub},
we can implement holonomically any combination of unitary
transformations in the different eigenspaces of $H(0)$ up to an
overall phase. If we want to implement the set of unitary operations
$\{W^A_i\}$ in the different subsystems $\mathcal{H}^A_i$, we can do
this by implementing the\index{holonomy} holonomy $W^A_i\otimes W^B_{\alpha_i}$ in
each of the eigenspaces
$\mathcal{H}^A_i\otimes\mathcal{H}^B_{\alpha_i}$ for
$\alpha_i=1,2,...,d_i$ where $W^B_{\alpha_i}$ are arbitrary
unitaries on $\mathcal{H}^B_{\alpha_i}$. This results in the net
unitary
\begin{equation}
U=\bigoplus_iW_i^A\otimes
(\bigoplus_{\alpha_i}e^{i\phi_{\alpha_i}}W^B_{\alpha_i})\equiv
\bigoplus_iW_i^A\otimes V^B_i,
\end{equation}
where $e^{i\phi_{\alpha_i}}$ are dynamical phases resulting in the
eigenspaces $\mathcal{H}^A_i\otimes\mathcal{H}^B_{\alpha_i}$. 
\end{proof}

To summarize, HQC on a subsystem can be realized by adiabatically
varying a Hamiltonian that acts locally on the corresponding
co-subsystem. During the evolution, the information initially
encoded in the subsystem transforms to a different subsystem which
is related to the initial one via a geometric unitary operation. The
dynamical part of the unitary factors out as a transformation on the
correspondingly transformed co-subsystem. The problem of FTHQC can
be understood as that of finding a fault-tolerant realization of
this approach.

\section{FTHQC on stabilizer codes without additional qubits}
\label{e4:originalscheme}

\subsection{The main idea}

The developed techniques for FTQEC on stabilizer codes provide a
prescription for how to encode information and how to transform the
subsystem containing the information (hereafter referred to as the code subsystem) so that the class of errors for
which the code is designed remains correctable. We will try to find
realizations of the same transformations that the code subsystem
follows in a standard fault-tolerant scheme using the generalized
method of HQC on subsystems. There are two difficulties in this
respect that have to be considered. First, not every unitary
evolution of a subsystem can be realized by holonomic means. For
example, a general evolution inside a fixed subsystem cannot be
implemented holonomically because the HQC method requires that the
encoded states leave the original code in order to undergo
non-trivial geometric transformations. Second, the holonomic
approach unavoidably gives rise to dynamical transformations on the
co-subsystem (see below), and these could jeopardize the fault
tolerance of the scheme. We will see that neither of these features
is a fundamental obstacle to the realization of fault-tolerant HQC.

The standard fault-tolerant techniques are primarily based on the
use of\index{transversal!operation} transversal operations. In addition, there are
non-transversal\index{transversal!operation} operations for preparing and verifying a special
ancillary state such as Shor's ``cat'' state
$(|00...0\rangle+|11...1\rangle)/\sqrt{2}$. Since single-qubit
unitaries together with the CNOT gate form a universal set\index{universal!set of quantum gates} of
gates, fault-tolerant computation can be realized entirely in terms
of single-qubit gates and transversal CNOT gates,\index{transversal!gate!CNOT} assuming that the
ancillary state can be prepared reliably. Thus we can aim at
constructing holonomic gates on the code subsystem via
transformations that in the original basis of the full Hilbert space
are equivalent to transversal one- and two-qubit gates or to
operations for the preparation of the ``cat'' state.

It turns out, however, that holonomic transformations on the code
subsystem cannot be realized using purely\index{transversal!operation} transversal operations
without the use of extra qubits, even if the encoded gate has a
purely transversal implementation in the dynamical case. This is
because the Hamiltonians that leave the code subsystem invariant are
linear combinations of elements of the stabilizer or, more generally (in the case of operator codes), the gauge group
of the code, and they unavoidably couple qubits in the same block.
But\index{transversal!operation} transversal operations are not the most general class of
operations that do not lead to propagation of errors. A
transformation which at every moment is equal to a transversal
operation followed by a syndrome-preserving transformation on the
co-factor of the subsystem that contains the protected information
is also fault-tolerant. In fact, any\index{transversal!operation} transversal operation in a
given fault-tolerant protocol can be safely replaced by an
operation of the latter type. It is this latter type of
transformations by means of which we can realize fault-tolerant HQC
without the use of extra qubits.

We will show that by choosing as a starting Hamiltonian a suitable
element of the stabilizer or the gauge group of the code and varying
this Hamiltonian along appropriately chosen paths in parameter
space, we can generate operations that -- from the perspective of
the full Hilbert space -- transform both the ground and the excited
spaces via the same\index{transversal!operation} transversal operation. This means, in
particular, that the transformation of the code subsystem is the
same as the one that would result from the application of the
\index{transversal!operation} transversal operation, up to a transformation on the co-subsystem.
The relative dynamical phase that accumulates between the ground and
excited spaces is equal to a phase on the co-subsystem which is
irrelevant for the fault-tolerance of the scheme since it is either
projected out when a measurement of the syndrome is performed, or is
equivalent to a gauge transformation. Thus by an appropriate
sequence of such transformations we can take the code subsystem
adiabatically along a loop such that the resultant\index{holonomy} holonomy at the
end is by construction equal to a given encoded gate. This is the
main idea behind our scheme. We note, however, that our scheme does
not use an isodegenerate Hamiltonian along a given loop. This is
because for simplicity we use Hamiltonians that are equal to a
single element of the stabilizer or the gauge group at a given time,
and we change the Hamiltonians along different portions of a loop.
In this respect, the scheme we will present is slightly more general
than the one described in Sec.~\ref{e4:HQCsubsystems}, but the main
idea behind its workings is the same---the full Hilbert space splits
into different subspaces, each of which is adiabatically transported
around a loop such that all subspaces undergo the same\index{holonomy} holonomy
which factors out as an operation on the code subsystem. In the case
of standard stabilizer codes, the subspaces in question are the
syndrome subspaces. In the case of subsystem codes, each syndrome
subspace further splits into subspaces containing redundant
information. In the latter case, if along the loop we change between
Hamiltonians that are non-commuting elements of the gauge group,
these redundant subspaces may seem to undergo dynamical
transformations in addition to the geometric ones. However, these
dynamical effects are equivalent to gauge transformations and do not
affect the\index{holonomy} holonomy taking place inside the code subsystem.

For concreteness, we will consider an $[[n,1,r,3]]$ stabilizer code.
This is a code that encodes $1$ qubit into $n$, has $r$ gauge
qubits, and can correct arbitrary single-qubit errors. As we saw in
the previous section, in order to apply holonomic transformations on
the subsystem that contains the logical information, we need a
nontrivial starting Hamiltonian that leaves this subsystem
invariant. It is easy to verify that the only Hamiltonians that
satisfy this property are linear combinations of the elements of the
stabilizer or the gauge group.

Note that the stabilizer and the gauge group transform during the
course of the computation under the operations being applied. At any
stage when we complete an encoded operation, they return to their
initial forms modulo a gauge transformation. Our scheme will follow the same transversal
operations as those used in a standard dynamical fault-tolerant
scheme, but as explained above, in addition we
will have extra dynamical phases that multiply each syndrome
subspace or are equivalent to more general gauge transformations. It is
easy to see that these phases leave the stabilizer invariant and transorm the gauge group via a gauge transformation, so without loss of generality we can omit them from our analysis of the transformation of these groups. During the implementation of
a standard encoded gate, the Pauli group\index{Pauli group} ${G}_n$ on a given codeword
may spread over other codewords, but it can be verified that this
spreading can be limited to at most $4$ other codewords including
the ``cat'' state. This is because the encoded CNOT gate can be
implemented fault-tolerantly on any stabilizer code by a transversal
operation on $4$ encoded qubits \cite{Gottesman:9705052}, and any
encoded Clifford gate can be realized using only the encoded CNOT,
provided that we are able to do fault-tolerant measurements\index{measurement!fault-tolerant} (the
encoded Clifford group\index{Clifford!group}\index{group!Clifford}
 is generated by the encoded Hadamard, Phase
and CNOT gates). Encoded gates outside of the Clifford group, such
as the encoded $\pi/8$\index{quantum gate!$\pi/8$} or\index{fault-tolerant operation!Toffoli gate} Toffoli gates, can be implemented
fault-tolerantly using encoded CNOT gates\index{quantum gate!CNOT} conditioned on the qubits
in a ``cat'' state, so they may require\index{transversal!operation} transversal operations on a
total of $5$ blocks. For CSS codes, however, the spreading of the
Pauli group of one block during the implementation of a basic
encoded operation can be limited to a total of $3$ blocks, since the
encoded CNOT gate has a transversal\index{transversal!gate!CNOT} implementation
\cite{Gottesman:9705052}.

It also should be noted that fault-tolerant encoded Clifford
operations can be implemented using only Clifford gates on the
physical qubits \cite{Gottesman:9705052}. These operations transform
the stabilizer and the gauge group into subgroups of the Pauli
group, and their elements remain in the form of tensor products of
Pauli matrices. The fault-tolerant implementation of encoded gates
outside of the Clifford group, however, involves operations that
take these groups outside of the Pauli group. We will, therefore,
consider separately two cases: encoded operations in the Clifford
group, and encoded operations outside of the Clifford group.

\subsection{Encoded operations in the Clifford group}

\subsubsection{Single-qubit unitary operations}\label{e4:sectionSingleQubit}

For applying transformations on a given qubit, say, the first one,
we will use as a starting Hamiltonian an element of the stabilizer
(with a minus sign) or the gauge group of the code, that acts
non-trivially on that qubit. Since we are considering codes that can
correct arbitrary single-qubit errors, one can always find an
element of the initial stabilizer or the initial gauge group that
has a factor $\sigma_0=I$, $\sigma_1=X$, $\sigma_2=Y$ or
$\sigma_3=Z$ acting on the first qubit, i.e.,
\begin{equation}
\widehat{G}=\sigma_i\otimes \widetilde{G},\hspace{0.4cm}i=0,1,2,3
\label{e4:stabelementgen}
\end{equation}
where $\widetilde{G}$ is a tensor product of Pauli matrices and the
identity on the remaining $n-1$ qubits. It can be verified that
under Clifford gates the stabilizer and the gauge group transform in
such a way that this is always the case except that the factor
$\widetilde{G}$ may spread to qubits in other blocks. We can assume
that the stabilizer spreads on at most 5 blocks including the
``cat'' state, since this is sufficient to implement any encoded
operation. Henceforth, we will use ``hat'' to denote operators on
all qubits on which the stabilizer spreads, and ``tilde'' to denote
operators on all of these qubits except the first one.

Without loss of generality we will assume that the chosen stabilizer
or gauge-group element for that qubit has the form
\begin{equation}
\widehat{G}=Z\otimes \widetilde{G}.\label{e4:stabelement}
\end{equation}
As initial Hamiltonian, we will take the operator
\begin{equation}
\widehat{H}(0)=-\widehat{G}=-Z\otimes \widetilde{G}.\label{e4:H(0)}
\end{equation}

\begin{proposition}\label{e4:Proposition1sub} If the initial Hamiltonian \eqref{e4:H(0)} is
varied adiabatically so that only the factor acting on the first
qubit changes,
\begin{equation}
\widehat{H}(t)=-H(t)\otimes \widetilde{G},\label{e4:Ham1}
\end{equation}
where
\begin{equation}
\textrm{Tr}\{H(t)\}=0,
\end{equation}
the transformation that each of the eigenspaces of this Hamiltonian
undergoes will be equivalent to that driven by a local unitary on
the first qubit, $\widehat{U}(t)\approx U(t)\otimes \widetilde{I}$.
\end{proposition}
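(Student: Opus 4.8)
The plan is to exploit the product structure $\widehat H(t)=-H(t)\otimes\widetilde G$ together with the fact that $\widetilde G$, being (up to a sign) a tensor product of Pauli matrices, squares to the identity, $\widetilde G^{2}=\widetilde I$. Thus $\widetilde G$ has only the eigenvalues $\pm1$, with orthogonal eigenprojectors $\widetilde\Pi_{\pm}$ onto eigenspaces $\mathcal H^{\widetilde G}_{\pm}$, and $\widetilde\Pi_{+}+\widetilde\Pi_{-}=\widetilde I$. The first thing I would note is that $I\otimes\widetilde G$ is a constant of the motion, since $[-H(t)\otimes\widetilde G,\,I\otimes\widetilde G]=-H(t)\otimes\widetilde G^{2}+H(t)\otimes\widetilde G^{2}=0$ for every $t$; hence the subspaces $\mathrm{ran}(I\otimes\widetilde\Pi_{\pm})$ are dynamically invariant, and on them $\widehat H(t)$ reduces to $\mp H(t)$ acting on the first qubit alone. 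Writing the traceless single-qubit operator as $H(t)=\lambda(t)\big(|\psi_{+}(t)\rangle\langle\psi_{+}(t)|-|\psi_{-}(t)\rangle\langle\psi_{-}(t)|\big)$ with $\lambda(t)\ge0$, one sees that $\widehat H(t)$ has exactly the two eigenvalues $\pm\lambda(t)$, each carried by a two-block eigenspace; the ground eigenspace is $\mathcal E_{-}(t)=\big(|\psi_{+}(t)\rangle\otimes\mathcal H^{\widetilde G}_{+}\big)\oplus\big(|\psi_{-}(t)\rangle\otimes\mathcal H^{\widetilde G}_{-}\big)$, and the excited one is obtained by exchanging $\mathcal H^{\widetilde G}_{+}\leftrightarrow\mathcal H^{\widetilde G}_{-}$.

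Next I would invoke the adiabatic theorem---this is where the $\approx$ enters and where the gap $2\lambda(t)$ must stay open---so that each eigenspace is carried into the corresponding instantaneous eigenspace, the transport being generated by the Wilczek--Zee connection together with the dynamical phase of that eigenspace. The decisive computation is the connection within an eigenspace. Choosing the orthonormal basis $\{|\psi_{s}(t)\rangle\otimes|\phi^{s}_{a}\rangle\}$ of $\mathcal E_{-}(t)$, where $\{|\phi^{s}_{a}\rangle\}_{a}$ spans $\mathcal H^{\widetilde G}_{s}$, and using that $\widetilde G$---and hence every $|\phi^{s}_{a}\rangle$---is time-independent, the connection one-form collapses to
\begin{equation}
i\big(\langle\psi_{s}(t)|\otimes\langle\phi^{s}_{a}|\big)\frac{d}{dt}\big(|\psi_{s'}(t)\rangle\otimes|\phi^{s'}_{a'}\rangle\big)=i\,\langle\psi_{s}(t)|\dot\psi_{s'}(t)\rangle\,\delta_{ss'}\delta_{aa'}.
\end{equation}
The crucial point is that this is block-diagonal: it acts only on the first-qubit eigenstate label $s$ and trivially on the degeneracy label $a$, so the a priori non-Abelian holonomy cannot couple the two $\widetilde G$-sectors and reduces to a relative Berry phase $\gamma_{s}=\int i\langle\psi_{s}|\dot\psi_{s}\rangle\,dt$ between the $|\psi_{+}\rangle$ and $|\psi_{-}\rangle$ blocks.

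Assembling the instantaneous-eigenbasis change $|\psi_{s}(0)\rangle\mapsto|\psi_{s}(t)\rangle$ with these phases then yields, on each eigenspace, $\widehat U(t)\big|_{\mathcal E}=\big(U(t)\otimes\widetilde I\big)\big|_{\mathcal E}$ with the genuine single-qubit unitary $U(t)=\sum_{s=\pm}e^{i\gamma_{s}(t)}|\psi_{s}(t)\rangle\langle\psi_{s}(0)|$, up to the dynamical phase of that eigenspace. Because the latter is fixed by the eigenvalue $\mp\lambda(t)$ it is common to both blocks of a given eigenspace---an overall phase there---so the local unitaries obtained on the ground and excited eigenspaces coincide up to the relative dynamical phase $\exp(\mp2i\!\int\!\lambda\,dt)$, which is exactly the harmless phase on the co-subsystem identified earlier. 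This establishes $\widehat U(t)\approx U(t)\otimes\widetilde I$ eigenspace by eigenspace.

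I expect the main obstacle to be the justification that the Wilczek--Zee connection is block-diagonal, i.e.\ that the potentially non-Abelian adiabatic transport inside each degenerate eigenspace does not entangle the first qubit with the $\widetilde G$-sectors. Fortunately this is settled independently by the exact conservation of $I\otimes\widetilde G$ from the first step: since the $\widetilde G$-sectors are strictly invariant under the full dynamics, no transport---adiabatic or otherwise---can couple them, which is precisely why the connection comes out diagonal in the $a$ label. The remaining care points are then the bookkeeping of the relative dynamical phase between the two eigenspaces and the argument that it lives entirely on the co-subsystem, together with the requirement that $H(t)$ stay nonsingular along the path so that the non-adiabatic corrections hidden in $\approx$ are suppressed in the slow-driving limit.
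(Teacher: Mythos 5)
Your proof is correct and follows essentially the same route as the paper's: the paper's first step is precisely your decomposition into the $\pm1$ eigensectors of $\widetilde G$ (via the projectors $\widetilde P_{0,1}=(\widetilde I\mp\widetilde G)/2$), after which it applies the non-degenerate single-qubit adiabatic theorem sector by sector and recombines on the eigenprojectors of $\widehat H(0)$, which are exactly your two-block eigenspaces. The only cosmetic difference is that you invoke the degenerate (Wilczek--Zee) adiabatic theorem on the full Hamiltonian and then check that the connection is diagonal in the degeneracy label, whereas the paper sidesteps this by computing the exact factorized evolution first and only then taking the adiabatic limit of the single-qubit factors; the content is the same.
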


\begin{proof}
Observe that \eqref{e4:Ham1} can be written as
\begin{equation}
\widehat{H}(t)=H(t)\otimes \widetilde{P}_0- H(t)\otimes
\widetilde{P}_1,\label{e4:Ham2}
\end{equation}
where
\begin{equation}
\widetilde{P}_{0}=\frac{\widetilde{I}- \widetilde{G}}{2},
\hspace{0.4cm}\widetilde{P}_{1}=\frac{\widetilde{I}+
\widetilde{G}}{2},\label{e4:Projectors}
\end{equation}
are orthogonal complementary projectors. The evolution driven by
$\widehat{H}(t)$ is therefore
\begin{equation}
\widehat{U}(t) = U_0(t)\otimes \widetilde{P}_0 + U_1(t) \otimes
\widetilde{P}_1,\label{e4:overallunitary}
\end{equation}
where
\begin{equation}
U_{0,1}(t)=\mathcal{T}\textrm{exp}(-i\overset{t}{\underset{0}{\int}}\pm
H(\tau)d\tau).\label{e4:unitaries}
\end{equation}

Let $|\phi_{0}(t)\rangle$ and $|\phi_{1}(t)\rangle$ be the
instantaneous ground and excited states of $H(t)$ with eigenvalues
$E_{0,1}(t)=\mp E(t)$ ($E(t)>0$). Then in the adiabatic limit we have
\begin{equation}
U_{0,1}(t)= e^{i\omega(t)}U_{A_{0,1}}(t)\oplus
e^{-i\omega(t)}U_{A_{1,0}}(t),\label{e4:unita}
\end{equation}
where $\omega(t)= \int_0^t d\tau E(\tau)$ and
\begin{gather}
U_{A_{0,1}}(t)=e^{\int_0^t d\tau
\langle\phi_{0,1}(\tau)|\frac{d}{d\tau}|\phi_{0,1}(\tau)\rangle
}|\phi_{0,1}(t)\rangle \langle \phi_{0,1}(0)|.\label{e4:holon}
\end{gather}

The projectors on the ground and excited eigenspaces of
$\widehat{H}(0)$ are
\begin{equation}
\widehat{P}_0=|\phi_0(0)\rangle\langle\phi_0(0)|\otimes
\widetilde{P}_0 + |\phi_1(0)\rangle\langle\phi_1(0)|\otimes
\widetilde{P}_1
\end{equation}
and
\begin{equation}
\widehat{P}_1=|\phi_1(0)\rangle\langle\phi_1(0)|\otimes
\widetilde{P}_0 + |\phi_0(0)\rangle\langle\phi_0(0)|\otimes
\widetilde{P}_1,
\end{equation}
respectively. Using Eq.~\eqref{e4:unita} and Eq.~\eqref{e4:holon}, one
can see that the effect of the unitary \eqref{e4:overallunitary} on
each of these projectors\index{projector} is
\begin{equation}
\widehat{U}(t)\widehat{P}_0=e^{i\omega(t)}(U_{A_{0}}(t)\oplus
U_{A_{1}}(t))\otimes\widetilde{I} \hspace{0.1cm}\widehat{P}_0,
\end{equation}
\begin{equation}
\widehat{U}(t)\widehat{P}_1=e^{-i\omega(t)}(U_{A_{0}}(t)\oplus
U_{A_{1}}(t))\otimes\widetilde{I} \hspace{0.1cm}\widehat{P}_1,
\end{equation}
i.e, up to an overall dynamical phase its effect on each of the
eigenspaces is the same as that of the unitary
\begin{equation}
\widehat{U}(t)=U(t)\otimes\widetilde{I},
\end{equation}
where
\begin{equation}
U(t)=U_{A_{0}}(t)\oplus U_{A_{1}}(t)\label{e4:finalU}.
\end{equation}
\end{proof}

We next show how by suitably choosing $H(t)$ we can implement all
necessary single-qubit gates. We will identify a set of points in
parameter space, such that by interpolating between these points we
can draw various paths resulting in the desired transformations.

Consider the single-qubit unitary operator
\begin{equation}
V^{\theta\pm}=\frac{1}{\sqrt{2}}\begin{pmatrix} 1& \mp e^{-i\theta}\\
\pm e^{i\theta}&1
\end{pmatrix},
\end{equation}
where $\theta$ is a real parameter. Note that $V^{\theta
\mp}=(V^{\theta \pm})^{\dagger}$. Define the following single-qubit
Hamiltonian:
\begin{equation}
H^{\theta\pm}\equiv V^{\theta\pm}ZV^{\theta\mp}.
\end{equation}
Let $H(t)$ in Eq.~\eqref{e4:Ham1} be a Hamiltonian\index{Hamiltonian!interpolating} that interpolates
between $H(0)=Z$ and $H(T)=H^{\theta\pm}$ (up to a factor) as
follows:
\begin{equation}
H(t)=f(t)Z+g(t) H^{\theta\pm}\equiv H^{\theta\pm}_{f,g}(t),
\label{e4:interpolation}
\end{equation}
where $f(0),g(T)>0$, $f(T)=g(0)=0$. To simplify our notation, we
will drop the indices $f$ and $g$ of the Hamiltonian, since the
exact form of these functions is not important for our analysis as
long as they are sufficiently smooth (see discussion in
Sec.~\ref{e4:sectionAdicond}). This Hamiltonian has eigenvalues $\pm
\sqrt{f(t)^2+g(t)^2}$ and its energy gap is non-zero unless the
entire Hamiltonian vanishes. It can be shown that in the adiabatic
limit, the Hamiltonian \eqref{e4:Ham1} with $H(t)=H^{\theta\pm}(t)$
gives rise to the effective transformation
\begin{equation}
\widehat{U}^{\theta\pm}(T)=V^{\theta\pm}\otimes \widetilde{I}
\end{equation}
on each eigenspace. Details of the proof can be found in
Ref.~\cite{Oreshkov:2008:022325}.

We will use this result to construct a set of standard gates by
sequences of operations of the form $V^{\theta\pm}$, which can be
generated by interpolations of the type \eqref{e4:interpolation} run
forward or backward. For single-qubit gates in the Clifford group,
we will only need three values of $\theta$: $0$, $\pi/2$ and
$\pi/4$. For completeness, however, we will also demonstrate how to
implement the $\pi/8$ gate, which together with the Hadamard gate\index{quantum gate!Hadamard} is
sufficient to generate any single-qubit unitary transformation
\cite{Boykin:1999:486}. For this we will need $\theta=\pi/8$. Note
that
\begin{equation}
H^{\theta\pm}= \pm (\cos{\theta}X+\sin{\theta}Y),
\end{equation}
so for these values of $\theta$ we have $H^{0\pm}=\pm X$,
$H^{\pi/2\pm}=\pm Y$, $H^{\pi/4\pm}=\pm
(\frac{1}{\sqrt{2}}X+\frac{1}{\sqrt{2}}Y)$, $H^{\pi/8\pm}=\pm
(\cos{\frac{\pi}{8}}X+\sin{\frac{\pi}{8}}Y)$.

Consider the adiabatic interpolations between the following
Hamiltonians:
\begin{equation}
-Z\otimes\widetilde{G} \rightarrow -Y\otimes \widetilde{G}
\rightarrow Z\otimes\widetilde{G}.\label{e4:Xgate}
\end{equation}
According to the above result, the first interpolation yields the
transformation $V^{\pi/2+}$. The second interpolation can be
regarded as the inverse of $Z\otimes\widetilde{G}\rightarrow
-Y\otimes\widetilde{G}$ which is equivalent to
$-Z\otimes\widetilde{G}\rightarrow Y\otimes\widetilde{G}$ since
$\widehat{H}(t)$ and $-\widehat{H}(t)$ yield the same geometric
transformations. Thus the second interpolation results in
$(V^{\pi/2-})^{\dagger}=V^{\pi/2+}$. The net result is therefore
$V^{\pi/2+}V^{\pi/2+}=iX$. We see that up to a global phase\index{global phase} the
above sequence results in an implementation of the $X$ gate in each
eigenspace.

Similarly, one can verify that the $Z$ gate can be realized via
the loop
\begin{equation}
-Z\otimes\widetilde{G} \rightarrow -X\otimes
\widetilde{G}\rightarrow Z\otimes\widetilde{G} \rightarrow
Y\otimes \widetilde{G}\rightarrow
-Z\otimes\widetilde{G}.\label{e4:Zgate}
\end{equation}

The Phase gate $P$ can be realized by applying
\begin{equation}
-Z\otimes\widetilde{G}\rightarrow
-(\frac{1}{\sqrt{2}}X+\frac{1}{\sqrt{2}}Y)\otimes\widetilde{G}\rightarrow
Z\otimes\widetilde{G},
\end{equation}
followed by the $X$ gate.

The Hadamard gate $W$ can be realized by first applying $Z$,
followed by
\begin{equation}
-Z\otimes\widetilde{G} \rightarrow -X\otimes \widetilde{G}.
\end{equation}

Finally, the $\pi/8$ gate $T$ can be implemented by first applying
$Y=iXZ$, followed by
\begin{equation}
Z\otimes\widetilde{G}\rightarrow
-(\cos{\frac{\pi}{8}}X+\sin{\frac{\pi}{8}}Y)\otimes\widetilde{G}\rightarrow
-Z\otimes\widetilde{G}.
\end{equation}

\subsubsection{A note on the adiabatic
condition}\label{e4:sectionAdicond}

Before we show how to implement the CNOT gate, let us comment on the
conditions under which the adiabatic approximation assumed in the
above operations is satisfied. Because of the form
\eqref{e4:overallunitary} of the overall unitary, the adiabatic
approximation depends on the extent to which each of the unitaries
\eqref{e4:unitaries} approximate the expressions \eqref{e4:unita}. The
latter depends only on the adiabatic properties of the
non-degenerate two-level Hamiltonian $H(t)$. For such a Hamiltonian,
the simple version of the adiabatic condition
\cite{Messiah:1965:North} reads
\begin{equation}
\frac{\varepsilon}{{\Delta}^2}\ll 1, \label{e4:adi}
\end{equation}
where
\begin{equation} \varepsilon =
\operatornamewithlimits{max}_{0\leq t\leq T}|\langle
\phi_1(t)|\frac{dH(t)}{dt}|\phi_0(t)\rangle|, \label{e4:eps}
\end{equation}
and
\begin{equation}
\Delta = \operatornamewithlimits{min}_{0\leq t\leq
T}(E_1(t)-E_0(t))= \operatornamewithlimits{min}_{0\leq t\leq
T}2E(t) \label{e4:Delt}
\end{equation}
is the minimum energy gap of $H(t)$.

Along the segments of the parameter paths we described, the
Hamiltonian is of the form \eqref{e4:interpolation} and its derivative
is
\begin{equation}
\frac{dH^{\theta\pm}(t)}{dt}=\frac{df(t)}{dt}Z+\frac{dg(t)}{dt}H^{\theta\pm},
\hspace{0.6cm} 0<t<T.
\end{equation}
This derivative is well defined as long as $\frac{df(t)}{dt}$ and
$\frac{dg(t)}{dt}$ are well defined. The curves we described,
however, may not be differentiable at the points connecting two
segments. In order for the Hamiltonians \eqref{e4:interpolation} that
interpolate between these points to be differentiable, the functions
$f(t)$ and $g(t)$ have to satisfy $\frac{df(T)}{dt}=0$ and
$\frac{dg(0)}{dt}=0$. This means that the change of the Hamiltonian
slows down to zero at the end of each segment (except for a possible
change in its strength), and increases again from zero along the
next segment. We point out that when the Hamiltonian stops changing,
we can turn it off completely by decreasing its strength. This can
be done arbitrarily fast and it would not affect a state which
belongs to an eigenspace of the Hamiltonian. Similarly, we can turn
on another Hamiltonian for the implementation of a different
operation.

The above condition guarantees that the adiabatic approximation is
satisfied with precision
$1-\textit{O}((\frac{\varepsilon}{{\Delta}^2})^2)$. It is known,
however, that under certain conditions on the Hamiltonian, we can
obtain better results \cite{Hagedorn:2002:235}. Let us write the
Schr\"{o}dinger equation as
\begin{equation}
i\frac{d}{dt}|\psi(t)\rangle = H(t) |\psi(t)\rangle\equiv
\frac{1}{\epsilon} \bar{H}(t) |\psi(t)\rangle,
\end{equation}
where $\epsilon>0$ is small. Assume that $\bar{H}(t)$ is smooth and
all its derivatives vanish at the end points $t=0$ and $t=T$ (note
that $\bar{H}(t)$ is non-analytic at these points, unless it is
constant). Then if we keep $\bar{H}(t)$ fixed and vary $\epsilon$,
the adiabatic error would scale super-polynomially with $\epsilon$,
i.e., the error will decrease with $\epsilon$ faster than
$\textit{O}(\epsilon^N)$ for any $N$. (Notice that
$\frac{\varepsilon}{{\Delta}^2}\propto \epsilon$, i.e., the standard
adiabatic condition guarantees error $\textit{O}(\epsilon^2)$.)

In our case, the smoothness condition translates directly to the
functions $f(t)$ and $g(t)$. For any smooth $f(t)$ and $g(t)$ we can
further ensure that the condition at the end points is satisfied by
the reparameterization $f(t)\rightarrow f(y(t))$, $g(t)\rightarrow
g(y(t))$ where $y(t)$ is a smooth function of $t$ which satisfies
$y(0)=0$, $y(T)=T$, and has vanishing derivatives at $t=0$ and
$t=T$. Then by slowing down the change of the Hamiltonian by a
constant factor $\epsilon$, which amounts to an increase of the
total time $T$ by a factor $1/\epsilon$, we can decrease the error
super-polynomially in $\epsilon$. We will use this result to obtain
a low-error interpolation in Sec.~\ref{e4:FTofthescheme} where we
estimate the time needed to implement a holonomic gate with a given
precision.

\subsubsection{The CNOT gate}\label{e4:sectionCNOT}

The stabilizer (or gauge group) on multiple blocks of the code is a
direct product of the stabilizers (or gauge groups) of the
individual blocks. Therefore, from Eq.~\eqref{e4:stabelementgen} it
follows that one can always find an element of the initial
stabilizer or gauge group on multiple blocks that has any desired
combination of factors $\sigma^i$, $i=0,1,2,3$ on the first qubits
in these blocks. It can be verified that applying transversal\index{transversal!gate!Clifford}
\index{Clifford!operation} Clifford operations on the blocks does not change this property.
Therefore, we can find an element of the stabilizer or the gauge
group that has the form \eqref{e4:stabelement} where the factor $Z$
acts on the target qubit and $\widetilde{G}$ acts trivially on the
control qubit. We now explain how to implement the CNOT gate
adiabatically starting from such a Hamiltonian.

Notice that a Hamiltonian of the form
\begin{equation}
\widehat{\widehat{H}}(t)=|0\rangle\langle 0|^c \otimes
{H}_{0}(t)\otimes\widetilde{G}+|1\rangle\langle 1|^c\otimes
{H}_{1}(t)\otimes\widetilde{G},\label{e4:hathatH}
\end{equation}
where the superscript $c$ denotes the control qubit, gives rise to
the unitary transformation
\begin{equation}
\widehat{\widehat{U}}(t)=|0\rangle\langle 0|^c \otimes
\widehat{U}_0(t)+|1\rangle\langle 1|^c\otimes
\widehat{U}_1(t),\label{e4:doublehatUU}
\end{equation}
where
\begin{equation}
\widehat{U}_{0,1}(t)=\mathcal{T}\textrm{exp}(-i\overset{t}{\underset{0}{\int}}d\tau
H_{0,1}(\tau)\otimes\widetilde{G}).
\end{equation}
If $H_0(t)$ and $H_1(t)$ have the same non-degenerate instantaneous
spectra, and $\textrm{Tr}\{H_{0,1}(t)\}=0$, then it follows that in
the adiabatic limit each of the eigenspaces of
$\widehat{\widehat{H}}(t)$ will undergo the transformation
\begin{equation}
\widehat{\widehat{U}}_g(t)=|0\rangle\langle 0|^c \otimes
V_0(t)\otimes \widetilde{I}+|1\rangle\langle 1|^c\otimes
V_1(t)\otimes \widetilde{I},\label{e4:CNOTtransformation}
\end{equation}
where $V_{0,1}(t)\otimes \widetilde{I}$ are the effective
transformations generated by ${H}_{0,1}(t)\otimes\widetilde{G}$
according to Proposition~\ref{e4:Proposition1sub}.

Our goal is to find $H_0(t)$ and $H_1(t)$, $H_0(0)=H_1(0)=Z$, such
that at the end of the transformation the unitary
\eqref{e4:CNOTtransformation} will be equal to the CNOT gate. In other
words, we want $V_0(2T)=I$ and $V_1(2T)=X$ (we choose the total time
of evolution to be $2T$ for convenience).

We already saw how to generate adiabatically the $X$ gate up to a
phase---Eq.~\eqref{e4:Xgate}. We can use the same Hamiltonian in place
of $H_1(t)$:
\begin{equation}
H_1(t)=\begin{cases} H^{\pi/2+}(t), \hspace{0.2cm} 0\leq t \leq T\\
H^{\pi/2-}(2T-t), \hspace{0.2cm} T\leq t \leq 2T.
\end{cases}\label{e4:H1t}
\end{equation}

Now we want to find a Hamiltonian $H_0(t)$ with the same spectrum as
$H_1(t)$, which gives rise to a trivial geometric transformation,
$V_0(2T)=I$ (possibly up to a phase, which can be undone later).
Since all Hamiltonians of the type $H^{\theta\pm}(t)$ have the same
instantaneous spectrum (for fixed $f(t)$ and $g(t)$), we can simply
choose
\begin{equation}
H_0(t)=\begin{cases} H^{\pi/2+}(t), \hspace{0.2cm} 0\leq t \leq T\\
H^{\pi/2+}(2T-t), \hspace{0.2cm} T\leq t \leq 2T,
\end{cases}\label{e4:H0t}
\end{equation}
which corresponds to applying a given transformation from $t=0$ to
$t=T$ and then undoing it (running it backwards) from $t=T$ to
$t=2T$. This results exactly in $V_0(2T)=I$.

Since, as we saw in Sec.~\ref{e4:sectionSingleQubit}, the Hamiltonian
$H_1(t)\otimes \widetilde{G}$ gives rise to the transformation
$iX\otimes\widetilde I$ in each of its eigenspaces, the above choice
for the Hamiltonians \eqref{e4:H0t} and \eqref{e4:H1t} in
Eq.~\eqref{e4:hathatH} will result in the transformation
\begin{equation}
|0\rangle\langle 0|^c \otimes
I\otimes\widetilde{I}+i|1\rangle\langle 1|^c\otimes
{X}\otimes\widetilde{I},
\end{equation}
which is the desired CNOT gate up to a Phase gate\index{phase!gate} on the control
qubit. We can correct the phase by applying the inverse of the Phase
gate to the control qubit, either before or after the described
transformation.

Notice that from $t=0$ to $t=T$ the Hamiltonians \eqref{e4:H0t} and
\eqref{e4:H1t} are identical, i.e., during this period the Hamiltonian
\eqref{e4:hathatH} has the form
\begin{equation}
I^c\otimes H^{\pi/2+}(t)\otimes\widetilde{G},
\end{equation}
so we are simply applying the single-qubit operation $V^{\pi/2+}$ to
the target qubit according to the method for single-qubit gates
described before. It is easy to verify that during the second
period, from $t=T$ to $t=2T$, the Hamiltonian \eqref{e4:hathatH}
realizes the interpolation
\begin{equation}
-I^c\otimes Y\otimes \widetilde{G} \rightarrow -Z^c\otimes Z\otimes
\widetilde{G},\label{e4:CNOTint}
\end{equation}
which is understood as in Eq.~\eqref{e4:interpolation}.

To summarize, the CNOT gate can be implemented by first applying the
inverse of the Phase gate ($P^{\dagger}$) on the control qubit, as
well as the transformation $V^{\pi/2+}$ on the target qubit,
followed by the transformation \eqref{e4:CNOTint}. Due to the form
\eqref{e4:hathatH} of $\widehat{\widehat{H}}(t)$, the extent to which
the adiabatic approximation is satisfied during this transformation
depends only on the adiabatic properties of the single-qubit
Hamiltonians $H^{\pi/2\pm}(t)$ which we discussed in the previous
subsection.

\subsection{Encoded operations outside of the Clifford group}

For universal fault-tolerant computation\index{universal!fault-tolerant quantum computation} we also need at least one
encoded gate outside of the Clifford group.\index{group!Clifford}\index{Clifford!group}\index{group!Clifford}
 The fault-tolerant
implementation of such gates is based on the preparation of a
special encoded state
\cite{Shor:1996:56,Knill:1998:365,Gottesman:9705052, Boykin:1999:486,
Zhou:2000:052316} which involves a measurement of an encoded
operator in the Clifford group. For example, the $\pi/8$ gate\index{quantum gate!$\pi/8$}
requires the preparation of the encoded state
$\frac{|0\rangle+\textrm{exp}(i\pi/4)|1\rangle}{\sqrt{2}}$, which
can be realized by measuring the encoded operator $e^{-i\pi/4}PX$
\cite{Boykin:1999:486}. Equivalently, the state can be obtained by
applying the encoded operation $WP^{\dagger}$ on the encoded state
$\frac{\cos(\pi/8)|0\rangle+\sin(\pi/8)|1\rangle}{\sqrt{2}}$ which
can be prepared by measuring the encoded Hadamard gate
\cite{Knill:1998:365}. The Toffoli gate requires the preparation of
the three-qubit encoded state
$\frac{|000\rangle+|010\rangle+|100\rangle+|111\rangle}{2}$ and
involves a similar procedure \cite{Zhou:2000:052316}. In all these
instances, the measurement of the encoded Clifford operator is
realized by applying transversally the operator conditioned on the
qubits in a ``cat'' state.

We now describe a general method that can be used to implement
adiabatically any conditional transversal\index{transversal!gate!adiabatic}\index{Clifford!operation} Clifford operation with
conditioning on the ``cat'' state. Let $O$ be a Clifford gate acting
on the first qubits from some set of blocks. As we discussed in the
previous section, under this unitary the stabilizer and the gauge
group transform in such a way that we can always find an element
with an arbitrary combination of Pauli matrices on the first qubits.
If we write this element in the form
\begin{equation}
\widehat{G}=G_1\otimes G_{2,...,n},
\end{equation}
where $G_1$ is a tensor product of Pauli matrices acting on the
first qubits from the blocks, and $G_{2,...,n}$ is an operator on
the rest of the qubits, then applying $O$ conditioned on the first
qubit in a ``cat'' state transforms this stabilizer or gauge-group
element as follows:
\begin{gather}
I^c\otimes G_1\otimes G_{2,...,n}=|0\rangle\langle 0|^c\otimes
G_1\otimes G_{2,...,n}+|1\rangle\langle 1|^c\otimes G_1\otimes
G_{2,...,n} \notag\\\rightarrow |0\rangle\langle 0|^c\otimes G_1\otimes
G_{2,...,n}+|1\rangle\langle 1|^c\otimes OG_1O^{\dagger}\otimes
G_{2,...,n},
\end{gather}
where the superscript $c$ denotes the control qubit from the
``cat'' state. We can implement this operation by choosing the
factor $G_1$ to be the same as the one we would use if we wanted to
implement the operation $O$ according to the previously described
procedure. Then we can apply the following Hamiltonian:
\begin{equation}
\widehat{\widehat{H}}_{C(O)}(t)=-|0\rangle\langle 0|^c\otimes
G_1\otimes G_{2,...,n}-\alpha(t)|1\rangle\langle 1|^c\otimes
H_O(t)\otimes G_{2,...,n},\label{e4:HamA}
\end{equation}
where $H_O(t)\otimes G_{2,...,n}$ is the Hamiltonian that we would
use for the implementation of the operation $O$ and $\alpha(t)$ is a
real parameter chosen such that at every moment the operator
$\alpha(t)|1\rangle\langle 1|^c\otimes H_O(t)\otimes G_{2,...,n}$
has the same instantaneous spectrum as the operator
$|0\rangle\langle 0|^c\otimes G_1\otimes G_{2,...,n}$. This
guarantees that the overall Hamiltonian is degenerate and the
transformation on each of its eigenspaces is
\begin{equation}
\widehat{\widehat{U}}_g(t)=|0\rangle\langle 0|^c\otimes I_1\otimes
I_{2,...,n}+|1\rangle\langle 1|^c\otimes U_O(t)\otimes
I_{2,...,n},
\end{equation}
where $U_O(t)$ is the transformation on the first qubits generated
by $H_O(t)\otimes G_{2,...,n}$. Since we presented the constructions
of the basic\index{Clifford!operation} Clifford operations up to an overall phase, the
operation $U_O(t)$ may differ from the desired operation by a phase.
This phase can be corrected by applying a suitable gate on the
control qubit from the ``cat'' state (we explain how this can be done
in the next section). We remark that a Hamiltonian of the type
\eqref{e4:HamA} requires fine tuning of the parameter $\alpha(t)$ and
generally can be complicated. In Sec.~\ref{e4:BScode3local} we will
show that depending on the code one can find more natural
implementations of these operations.

If we want to apply a second conditional\index{Clifford!operation} Clifford operation $Q$ on
the first qubits in the blocks, we can do this as follows. Imagine
that if we had to apply the operation $Q$ following the operation
$O$, we would use the Hamiltonian $\widehat{H}_Q(t)=-H_Q(t)\otimes
G'_{2,...,n}$, where $\widehat{H}_Q(0)=OG'_1O^{\dagger}\otimes
G'_{2,...,n}$ is a suitable element of the stabilizer or the gauge
group after the application of $O$. Before the application of $O$,
that element would have had the form $G'_1\otimes G'_{2,...,n}$.
Under the application of a conditional $O$, the element $G'_1\otimes
G'_{2,...,n}$ transforms to $|0\rangle\langle 0|^c\otimes
G'_1\otimes G'_{2,...,n}+|1\rangle\langle 1|^c\otimes
OG'_1O^{\dagger}\otimes G'_{2,...,n}$ which can be used (with a
minus sign) as a starting Hamiltonian for a subsequent operation. In
particular, we can implement the conditional $Q$ following the
conditional $O$ using the Hamiltonian
\begin{equation}
\widehat{\widehat{H}}_{C(Q)}(t)=-|0\rangle\langle 0|^c\otimes
G'_1\otimes G'_{2,...,n}-\beta(t)|1\rangle\langle 1|^c\otimes
H_Q(t)\otimes G'_{2,...,n}, \label{e4:HamB}
\end{equation}
where the factor $\beta(t)$ guarantees that there is no splitting of
the energy  levels. Subsequent operations can be applied analogously.
Using this general method, we can implement a unitary whose
geometric part is equal to any transversal\index{Clifford!operation} Clifford operation
conditioned on the ``cat" state.

\subsection{Preparing and using the ``cat'' state}\label{e4:sectionCAT}

In addition to\index{transversal!operation} transversal operations, a complete fault-tolerant
scheme requires the ability to prepare, verify and use a special
ancillary state such as the ``cat'' state
$(|00...0\rangle+|11...1\rangle)/\sqrt{2}$. This can also be done in
the spirit of the described holonomic scheme. Since the ``cat'' state
is known and its construction is non-fault-tolerant, we can prepare
it by simply treating each initially prepared qubit as a simple code
(with $\widetilde{G}$ in Eq.~\eqref{e4:stabelement} being trivial), and
updating the stabilizer of the code via the applied transformation
as the operation progresses. The stabilizer of the prepared ``cat''
state is generated by $Z_iZ_j$, $i<j$. Transversal unitary
operations between the ``cat'' state and other codewords are applied
as described in the previous sections.

We also have to be able to measure the parity of the state, which
requires the ability to apply successive CNOT operations from two
different qubits in the ``cat'' state to the same ancillary qubit
initially prepared in the state $|0\rangle$. We
can regard a qubit in state $|0\rangle$ as a simple code with stabilizer $%
\langle Z \rangle $, and we can apply the first CNOT as described
before. Even though after this operation the state of the target
qubit is unknown, the second CNOT gate can be applied via the same
interaction, since the transformation in each eigenspace of the
Hamiltonian is the same and at the end when we measure the qubit we
project onto one of the eigenspaces.

\subsection{Fault tolerance of the scheme}\label{e4:FTofthescheme}

We saw how we can transform the code subsystem by any transversal
operation adiabatically, which allows us, by a sequence of such
operations, to generate a\index{holonomy} holonomy in the subsystem which is
equivalent to any encoded gate. This was based on the assumption
that the state has not undergone an error. But what if an error
occurs on one of the qubits?

At any moment, we can distinguish two types of errors---those that
result in transitions between the ground and the excited spaces of
the current Hamiltonian, and those that result in transformations
inside the eigenspaces. Due to the discretization of errors in
quantum error correction, it suffices to prove correctability for
each type separately. The key property of the construction we
described is that each of the eigenspaces undergoes the same
transversal transformation. Because of this, if we are applying a
unitary on the first qubit, an error on that qubit will remain
localized regardless of whether it causes an excitation or not. If
the error occurs on one of the other qubits, at the end of the
transformation the result would be the desired single-qubit unitary
gate plus the error on the other qubit, which is correctable.

We see that even though the Hamiltonian couples qubits in the same
block, single-qubit errors do not propagate. This is because the
coupling between the qubits amounts to a change in the relative
phase between the ground and excited spaces, but the latter is
irrelevant since either it is equivalent to a gauge transformation,
or when we apply a correcting operation we project on one of the
eigenspaces. In the case of the CNOT gate, an error can propagate
between the control and the target qubits, but it never results in
two errors within the same codeword.

\subsection{FTHQC with the Bacon-Shor
code\index{code!Bacon-Shor} using 3-local Hamiltonians}\label{e4:BScode3local}

The weight of the Hamiltonians needed for the scheme we described
depends on the weight of the stabilizer or gauge-group elements.
Remarkably, certain codes possess stabilizer or gauge-group elements
of low weight covering all qubits in the code, which allows us to
perform holonomic computation using low-weight Hamiltonians. Here we
will consider as an example a subsystem generalization of the
9-qubit Shor code \cite{Shor:1995:R2493}---the Bacon-Shor code
\cite{Bacon:2006:012340, Bacon:0610088}---which has particularly
favorable properties for fault-tolerant computation
\cite{Aliferis:2007:220502, Aliferis:0703230}. In the 9-qubit
Bacon-Shor code, the gauge group is generated by the weight-two
operators $Z_{k,j}Z_{k,j+1}$ and $X_{j,k}X_{j+1,k}$, where the
subscripts label the qubits by row and column when they are arranged
in a $3\times 3$ square lattice. Since the Bacon-Shor code is a CSS
code, the CNOT gate has a direct transversal\index{transversal!gate!CNOT} implementation. We now
show that the CNOT gate can be realized using at most weight-three
Hamiltonians.

If we want to apply a CNOT gate between two qubits each of which is,
say, in the first row and column of its block, we can use as a
starting Hamiltonian $-Z^{t}_{1,1}\otimes Z^t_{1,2}$, where the
superscript $t$ signifies that these are operators in the target
block. We can then apply the CNOT gate as described in the previous
section. After the operation, however, this gauge-group element will
transform to $-Z^{t}_{1,1}\otimes Z^c_{1,1}\otimes Z^t_{1,2}$. If we
now want to implement a CNOT gate between the qubits with index
$\{1,2\}$ using as a starting Hamiltonian the operator
$-Z^{t}_{1,1}\otimes Z^c_{1,1}\otimes Z^t_{1,2}$ according to the
same procedure, we will have to use a four-qubit Hamiltonian. Of
course, at this point we can use the starting Hamiltonian
$-Z^t_{1,2}\otimes Z^t_{1,3}$, but if we had also applied a CNOT
between the qubits labelled $\{1,3\}$, this operator would not be
available---it would have transformed to $-Z^t_{1,2}\otimes
Z^t_{1,3}\otimes Z^c_{1,3}$.

What we can do instead, is to use as a starting Hamiltonian the
operator $ -Z^{t}_{1,1}\otimes Z^t_{1,2}\otimes Z^c_{1,2}$ which is
obtained from the gauge-group element $ Z^{t}_{1,1}\otimes
Z^c_{1,1}\otimes Z^t_{1,2}\otimes Z^c_{1,2}$ after the application
of the CNOT between the qubits with index $\{1,1\}$. Since the CNOT
gate is its own inverse, we can regard the factor $Z^{t}_{1,1}$ as
$\widetilde{G}$ in Eq.~\eqref{e4:CNOTint} and use this starting
Hamiltonian to apply the procedure backwards. Thus we can implement
any transversal CNOT gate using at most weight-three Hamiltonians.

Since the encoded $X$, $Y$ and $Z$ operations have a bitwise
implementation, we can always apply them according to the described
procedure using Hamiltonians of weight 2. For the Bacon-Shor code,
the encoded Hadamard gate can be applied via bitwise Hadamard
transformations followed by a rotation of the grid by a $90$ degree
angle \cite{Aliferis:2007:220502}. The encoded $P$ gate can be
implemented by using the encoded CNOT and an ancilla.

The preparation and measurement of the ``cat'' state can also be
done using Hamiltonians of weight 2. To prepare the ``cat'' state,
we prepare first all qubits in the state $(|0\rangle +
|1\rangle)/\sqrt{2}$, which can be done by measuring each of them in
the $\{|0\rangle,|1\rangle\}$ basis (this ability is assumed for any
type of computation) and applying the transformation $-Z\rightarrow
-X$ or $Z\rightarrow -X$ depending on the outcome. To complete the
preparation of the ``cat'' state, apply a two-qubit transformation
between the first qubit and each of the other qubits ($j>1$) via the
transformation
\begin{equation}
-I_1\otimes X_j\rightarrow -Z_1\otimes Z_j .
\end{equation}
Single-qubit transformations on qubits from the ``cat'' state can
be applied according to the method described in the previous
section using at most weight-two Hamiltonians.

To measure the parity of the state, we need to apply successively
CNOT operations from two different qubits in the ``cat'' state to the
same ancillary qubit initially prepared in the state $|0\rangle$.
This can also be done according to the method described in
Sec.~\ref{e4:sectionCAT} and requires Hamiltonians of weight 2.

For universal computation with the Bacon-Shor code, we also need to 
be able to apply one encoded transformation outside of the Clifford
group. As we mentioned earlier, in order to implement the Toffoli
gate or the $\pi/8$ gate, it is sufficient to be able to implement a
CNOT gate conditioned on a ``cat'' state. For the Bacon-Shor code,
the CNOT gate has a transversal implementation, so the conditioned
CNOT gate can be realized by a series of transversal Toffoli
operations between the ``cat'' state and the two encoded states. We
will now show that this gate can be implemented using at most
three-qubit Hamiltonians.

Ref.~\cite{Nielsen:2000:CambridgeUniversityPress} provides a circuit for
implementing the Toffoli gate as a sequence of one- and two-qubit
gates. We will use the same circuit, except that we flip the control
and target qubits in every CNOT gate using the identity
\begin{equation}
(W_{1}W_{2}) CX_{1,2}(W_{1}W_{2})= CX_{2,1},
\end{equation}
where $W_{i}$ denotes a Hadamard gate on the qubit labelled by $i$
and $CX_{i,j}$ denotes a CNOT gate between qubits $i$ and $j$ with
$i$ being the control and $j$ being the target. Let
$\textrm{Toffoli}_{i,j,k}$ denote the Toffoli gate on qubits $i$,
$j$ and $k$ with $i$ and $j$ being the two control qubits and $k$
being the target qubit, and let $P_{i}$ and $T_i$ denote the Phase
and $\pi/8$ gates on qubit $i$, respectively. Then the Toffoli gate
on three qubits (the first one of which we will assume to belong to
the ``cat'' state), can be written as:
\begin{gather}
\textrm{Toffoli}_{1,2,3}=W_{2}CX_{3,2}W_{3}T_{3}^{\dagger}W_{3}W_{1}CX_{3,1}W_{3}T_{3}W_{3}CX_{3,2}W_{3}T_{3}^{\dagger}\notag\\\times
W_{3} CX_{3,1}W_{3}T_{3}W_{3} W_{2}T_{2}^{\dagger}W_{2} CX_{2,1}
W_{2}T_{2}^{\dagger}W_{2}
CX_{2,1}W_{2}P_{2}W_{1}T_{1}.\label{e4:Toffoli}
\end{gather}
To show that each of the above gates can be implemented
adiabatically using Hamiltonians of weight at most 3, we will need
an implementation of the CNOT gate which is suitable for the case
when we have a stabilizer or gauge-group element of the form
\begin{equation}
\widehat{G}=X\otimes \widetilde{G},\label{e4:Xgenerator}
\end{equation}
where the factor $X$ acts on the target qubit and $\widetilde{G}$
acts trivially on the control qubit. By a similar argument to the
one in Sec.~\ref{e4:sectionCNOT}, one can verify that in this case the
CNOT gate can be implemented as follows: apply the operation
$P^{\dagger}$ on the control qubit (we describe how to do this for
our particular case below) together with the transformation
\begin{equation}
-X\otimes\widetilde{G}\rightarrow - Z\otimes
\widetilde{G}\rightarrow X\otimes\widetilde{G}\label{e4:CNOT1}
\end{equation}
on the target qubit, followed by the transformation
\begin{equation}
I^c\otimes X\otimes \widetilde{G}\rightarrow -(|0\rangle\langle
0|^c\otimes Z+|1\rangle\langle 1|^c\otimes Y)\otimes
\widetilde{G}\rightarrow-I^c\otimes X\otimes
\widetilde{G}.\label{e4:CNOT2}
\end{equation}

Since the second and the third qubits belong to blocks encoded with
the Bacon-Shor code, there are weight-two elements of the initial
gauge group of the form $Z\otimes Z$ covering all qubits. The
stabilizer generators on the ``cat'' state are also of this type.
Following the transformation of these operators according to the
sequence of operations \eqref{e4:Toffoli}, one can see that before
every CNOT gate in this sequence, there is an element of the form
\eqref{e4:Xgenerator} with $\widetilde{G}=Z$ that can be used to
implement the CNOT gate as described, provided that we can implement
the gate $P^{\dagger}$ on the control qubit. We also point out that
all single-qubit operations on qubit $1$ in this sequence can be
implemented according to the procedure describes in
Sec.~\ref{e4:sectionSingleQubit}, since at every step we have a
weight-two stabilizer element on that qubit with a suitable form.
Therefore, all we need to show is how to implement the necessary
single-qubit operations\index{single-qubit operation} on qubits $2$ and $3$. Due to the
complicated transformation of the gauge-group elements during the
sequence of operations \eqref{e4:Toffoli}, we will introduce a method
of applying a single-qubit operation with a starting Hamiltonian
that acts trivially on the qubit. For implementing single-qubit
operations on qubits $2$ and $3$ we will use as a starting
Hamiltonian the operator
\begin{equation}
\widehat{\widehat{H}}(0)=-I_{i}\otimes X_1\otimes \widetilde{Z},
\hspace{0.4 cm} i=2,3\label{e4:specialcase}
\end{equation}
where the first factor ($I_i$) acts on the qubit on which we want
to apply the operation ($2$ or $3$), and $X_1\otimes
\widetilde{Z}$ is the transformed (after the Hadamard gate $R_1$)
stabilizer element of the ``cat'' state that acts non-trivially on
qubit $1$ (the factor $\widetilde{Z}$ acts on some other qubit in
the ``cat'' state).

To implement a single-qubit gate on qubit $3$ for example, we
first apply the interpolation
\begin{equation}
-I_{3}\otimes X_1\otimes \widetilde{Z}\rightarrow -Z_{3}\otimes
Z_1\otimes \widetilde{Z}.\label{e4:nexttolast}
\end{equation}
This results in a two-qubit transformation $U_{1,3}$ on qubits $1$
and $3$ in each eigenspace. We do not have to calculate this
transformation exactly since we will undo it later, but the fact
that each eigenspace undergoes the same two-qubit transformation can
be verified similarly to the CNOT gate we described in
Sec.~\ref{e4:sectionCNOT}.

At this point, the Hamiltonian is of the form \eqref{e4:H(0)} with
respect to qubit 3, and we can apply any single-qubit unitary gate
$V_3$ according to the method described in
Sec.~\ref{e4:sectionSingleQubit}. This transforms the Hamiltonian to
$-V_3Z_{3}V_3^{\dagger}\otimes Z_1\otimes \widetilde{Z}$. We can now
``undo'' the transformation $U_{1,3}$ by the interpolation
\begin{equation}
-V_3Z_{3}V_{3}^{\dagger}\otimes Z_1\otimes
\widetilde{Z}\rightarrow-I_{3}\otimes X_1\otimes
\widetilde{Z}.\label{e4:lastHamiltonian}
\end{equation}
The latter transformation is the inverse of Eq.~\eqref{e4:nexttolast}
up to the single-qubit unitary transformation $V_3$, i.e., it
results in the transformation $V_3U_{1,3}^{\dagger}V_{3}^{\dagger}$.
Thus, the net result is
\begin{equation}
V_3U_{1,3}^{\dagger}V_3^{\dagger}V_3U_{1,3}=V_3,
\end{equation}
which is the desired single-qubit unitary transformation on qubit
$3$. Note that during this transformation, a single-qubit error can
propagate between qubits $1$ and $3$, but this is not a problem
since we are implementing a transversal Toffoli\index{transversal!gate!Toffoli} operation and such
an error would not result in more that one error per block of the
code.

We see that with the Bacon-Shor code\index{code!Bacon-Shor} the above scheme for
fault-tolerant HQC can be implemented with at most 3-local
Hamiltonians. This is optimal for this approach, because there are
no non-trivial codes with stabilizer or gauge-group elements of
weight smaller than 2 covering all qubits.

\subsection{Effects on the accuracy threshold}

Since the method we described conforms completely to a given
fault-tolerant scheme, it would not affect the error threshold\index{error!threshold} per
qubit per operation for that scheme. However, the allowed errors per
qubit per operation include both errors due to imperfectly applied
transformations and errors due to interaction with the environment.
It turns out that certain features of the method we described have an
effect on the allowed distribution of these errors within the
accuracy threshold.

First, observe that when applying the Hamiltonian \eqref{e4:Ham1}, we
cannot at the same time apply
operations on the other qubits on which the factor $%
\widetilde{G}$ acts non-trivially. Thus, some operations at the
lowest level of concatenation\index{concatenation!for fault tolerance of holonomic QC} that would otherwise be implemented
simultaneously might have to be implemented serially. The effect of
this is equivalent to slowing down the circuit by a constant factor.
(Note that we could also vary the factor $\widetilde{G}$
simultaneously with $H(t)$, but in order to obtain the same
precision as that we would achieve by a serial implementation, we
would have to slow down the change of the Hamiltonian by the same
factor.) The slowdown factor resulting from this loss of parallelism
is usually small since this problem occurs only at the lowest level
of concatenation. It can be verified that for the Bacon-Shor code,
we can apply operations on up to 6 out of the 9 qubits in a block
simultaneously. For example, when applying encoded single-qubit
operations, we can address simultaneously any two qubits in a row or
column by taking $\widetilde{G}$ to be a single-qubit operator $Z$
or $X$ on the third qubit in the same row or column. The
Hamiltonians used to apply operations on the two qubits commute with
each other at all times and do not interfere. A similar phenomenon
holds for the implementation of the encoded CNOT gate or the
operations involving the ``cat'' state. Thus, for the Bacon-Shor
code we have a slowdown due to parallelism by a factor of $1.5$.

A more significant slowdown results from the fact that the evolution
is adiabatic. In order to obtain a rough estimate of the slowdown
due specifically to the adiabatic requirement, we will compare the
time $T_{h}$ needed for the implementation of a gate according to
our method with precision $1-\delta$ to the time $T_{d}$ needed for
a dynamical realization of the same gate with the same strength of
the Hamiltonian. We will consider a realization of the $X$ gate via
the unitary interpolation
\begin{equation}
\widehat{H}(t)=-V_{X}(\tau(t))ZV_{X}^{\dagger}(\tau(t))\otimes
\widetilde{G}, \hspace{0.2cm} V_{X}(\tau(t))=\textrm{exp}
\left(i\tau(t)\frac{\pi }{2T_{h}}X\right),\label{e4:Hamest}
\end{equation}
where $\tau(0)=0$, $\tau(T_h)=T_h$. The energy gap of this
Hamiltonian is constant. The optimal dynamical implementation of the
same gate is via the Hamiltonian $-X$ for time
$T_{d}=\frac{\pi}{2}$.

As we argued in Sec.~\ref{e4:sectionAdicond}, the accuracy with which
the adiabatic approximation holds for the Hamiltonian \eqref{e4:Hamest}
is the same as that for the Hamiltonian
\begin{equation}
H(t)=V_{X}(\tau(t))ZV_{X}^{\dagger }(\tau(t)).\label{e4:Hamest2}
\end{equation}
We now present estimates for two different choices of the function
$\tau(t)$. The first one is
\begin{equation}
\tau(t)=t.
\end{equation}
In this case the Schr\"{o}dinger equation can be easily solved in
the instantaneous eigenbasis of the Hamiltonian \eqref{e4:Hamest2}.
For the probability that the initial ground state remains a ground
state at the end of the evolution, we obtain
\begin{equation}
p=\frac{1}{1+\varepsilon ^{2}}+\frac{%
\varepsilon ^{2}}{1+\varepsilon ^{2}}\cos ^{2}(\frac{\pi }{4\varepsilon }%
\sqrt{1+\varepsilon ^{2}})=1-\delta,
\end{equation}
where
\begin{equation}
\varepsilon =\frac{T_{d}}{T_{h}}.
\end{equation}
Expanding in powers of $\varepsilon $ and averaging the square of
the cosine whose period is much smaller than $T_{h}$, we obtain
the condition
\begin{equation}
\varepsilon ^{2}\leq 2\delta.
\end{equation}
Assuming, for example, that $\delta \approx 10^{-4}$ (approximately
the threshold for the 9-qubit Bacon-Shor code\index{code!Bacon-Shor}
\cite{Aliferis:2007:220502}), we obtain that the time of evolution
for the adiabatic case must be about 70 times longer than that in
the dynamical case.

It is known, however, that if $H(t)$ is smooth and its derivatives
vanish at $t=0$ and $t=T_h$, the adiabatic error decreases
super-polynomially with $T_h$ \cite{Hagedorn:2002:235}. To achieve
this, we will choose
\begin{equation}
\tau(t)= \frac{1}{a}\int_0^t dt' e^{-1/\sin(\pi
t'/T_h)},\hspace{0.2cm} a=\int_0^{T_h} dt' e^{-1/\sin(\pi
t'/T_h)}.
\end{equation}
For this interpolation, by a numerical solution we obtain that
when $T_h/T_d\approx 17$ the error is already of the order of
$10^{-6}$, which is well below the threshold values obtained for
the Bacon-Shor codes \cite{Aliferis:2007:220502}. This is a remarkable improvement
in comparison to the previous interpolation which shows that the
smoothness of the Hamiltonian plays an important role in the
performance of the scheme.

An additional slowdown in comparison to a perfect dynamical scheme
may result from the fact that the constructions for some of the
standard gates we presented involve long sequences of loops. With
more efficient parameter paths, however, it should be possible to
reduce this slowdown to minimum.

In comparison to a dynamical implementation, the allowed rate of
environmental noise for the holonomic case would decrease by a
factor similar to the slowdown factor. In practice, however,
dynamical gates are not perfect and the holonomic approach may be
advantageous if it allows for a better precision.

We finally point out that an error in the factor $H(t)$ in the
Hamiltonian \eqref{e4:Ham1} would result in an error on the first qubit
according to Eq.~\eqref{e4:finalU}. Such an error clearly has to be
below the accuracy threshold. More dangerous errors, however, are
also possible. For example, if the degeneracy of the Hamiltonian is
broken, this can result in an unwanted dynamical transformation
affecting all qubits on which the Hamiltonian acts non-trivially.
Such multi-qubit errors have to be of higher order in the threshold,
which imposes more severe restrictions.

\section{Conclusion and outlook}

In this chapter we saw that HQC can be made fault tolerant by
combining it with the techniques for fault-tolerant quantum error
correction on stabilizer codes. This means that HQC is, at least in
principle, a scalable method of computation. However, further
research is needed in order to bring the presented ideas closer to
experimental realization.

We presented a scheme which uses Hamiltonians that are elements of
the stabilizer or the gauge group of the code. We saw that with the
Bacon-Shor code, this scheme can be implemented with 2- and 3-qubit
Hamiltonians. Since the scheme conforms completely to a given
dynamical fault-tolerant scheme and does not require the use of
extra qubits, it has the same error threshold as the dynamical
scheme on which it is based. However, due to the fact that adiabatic
gates are slower than dynamical gates and that at the lowest
concatenation level this scheme requires certain gates to be
implemented serially or more slowly, the allowed error threshold for
environmental noise is lower in comparison to a dynamical scheme.
The factor by which this threshold decreases depends on how smooth
the adiabatic interpolations are, but it seems to be at most $\sim
10^2$. Therefore, if the robustness provided by the geometric nature
of the gates is sufficiently higher than that achievable by
dynamical means, this approach could be advantageous in comparison
to dynamical schemes. The main challenge in the implementation of
this is approach, however, is that it requires the engineering of
3-local Hamiltonians. An alternative scheme for fault-tolerant HQC that uses Hamiltonians independent of the code at the expense of additional qubits \cite{Oreshkov:2009:090502} has been proven to allow reducing the\index{locality} locality
 of the Hamiltonian with perturbative gadget techniques, showing that 2-local Hamiltonians are universal for\index{universality!of 2-local Hamiltonians}
fault-tolerant HQC. The disadvantage of using the gadgets is that
they decrease the gap of the Hamiltonian by a very large factor,
which requires a significant slowdown of the computation and
decreases the allowed rate of environment noise.

Applying the strategies we have described
to actual physical systems will
undoubtedly require modifications in accordance with the available
interactions in those systems. A possible way of avoiding the use of
multi-local Hamiltonians without the use of perturbative gadgets
could be to use higher-dimensional systems (e.g., qutrits)\index{qutrit} between
which two-local interactions are naturally available. It may be
possible to encode qubits in subspaces or subsystems of these
higher-dimensional systems and use fault-tolerant techniques
designed for stabilizer codes based on\index{qudit} qudits
\cite{Gottesman:1999:1749}. Given that simple quantum
error-correcting codes and two-qubit geometric transformations have been realized using NMR\index{NMR (nuclear magnetic resonance)} \cite{Cory:1998:2152, Jones:2000:869} and ion-trap \cite%
{Chiaverini:2004:602, Leibfried:2003:412} techniques, these systems seem particularly suitable
for hybrid HQC-QEC implementations.

Finally, it is interesting to point out that the adiabatic regime in
which the holonomic schemes operate is consistent with the Markovian
model of decoherence. In Ref.~\cite{Alicki:2006:052311} it was
argued that the standard dynamical paradigm of fault tolerance is
based on assumptions that are in conflict with the rigorous
derivation of the Markovian limit. Although the threshold theorem
\index{theorem!threshold}
has been extended to non-Markovian models \cite{Terhal:2005:012336,
Aliferis:2006:97, Aharonov:2006:050504}, the Markovian assumption is
an accurate approximation for a wide range of physical scenarios
\cite{Carmichael:1993:Springer} and allows for a much simpler
description of the evolution in comparison to non-Markovian models
(see Chapter 8). In Ref.~\cite{Alicki:2006:052311} it was
shown that the weak-coupling-limit derivation of the Markovian
approximation is consistent with computational methods that employ
slow transformations, such as adiabatic quantum computation
\cite{Farhi:0001106} or HQC. A theory of fault-tolerance for the
adiabatic model of computation at present is not known, although
steps in this direction have been undertaken (see
Refs.~\cite{Jordan:2005:052322, Lidar:2008:160506}). The hybrid HQC-QEC schemes presented here provide
solutions for the case of HQC. However, we point out that it is an
open problem whether the Markovian approximation makes sense for a
fixed value of the adiabatic slowness parameter when the circuit
increases in size. Giving a definitive answer to this question
requires a rigorous analysis of the accumulation of non-Markovian
errors due to deviation from perfect adiabaticity.

The techniques described in this chapter may prove useful in other
areas as well. It is possible that some combination of transversal
adiabatic transformations and active correction could provide a
solution to the problem of fault tolerance in the adiabatic model of
computation.

\acknowledgements{O.O. acknowledges the support of the European Commission under the Marie Curie Intra-European Fellowship Programme (PIEF-GA-2010-273119).  This research was supported in part by the ARO MURI grant W911NF-11-1-0268, and by the Spanish
MICINN (Consolider-Ingenio QOIT). }

\bibliographystyle{plain}
\bibliography{refs}

\end{document}